\def\w{{\bf w}}
\def\y{{\bf y}}
\def\x{{\bf x}}
\def\x{{\mathbf x}}
\def\w{{\bf w}}
\def\x{{\bf x}}
\def\y{{\bf y}}
\def\z{{\bf z}}
\def\q{{\bf q}}
\def\a{{\bf a}}
\def\b{{\bf b}}
\def\h{{\bf h}}
\def\be{\begin{equation}}
\def\ee{\end{equation}}
\def\ba{\left[\begin{array}}
\def\ea{\end{array}\right]}
\def\t{{\bf t}}
\def\w{{\bf w}}
\def\x{{\bf x}}
\def\y{{\bf y}}
\def\z{{\bf z}}
\def\q{{\bf q}}
\def\a{{\bf a}}
\def\b{{\bf b}}
\def\1{{\bf 1}}
\def\g{{\bf g}}
\def\0{{\bf 0}}
\def\hatx{{\hat{\x}}}
\def\erfinv{\mbox{erfinv}}
\def\betasec{\beta_{sec}}
\def\hthetasec{\hat{\theta}_{sec}}
\def\thetasec{\theta_{sec}}
\newtheorem{theorem}{Theorem}
\begin{document}

\begin{singlespace}

\title {Upper-bounding $\ell_1$-optimization sectional thresholds
\footnote{ This work was supported in part by NSF grant \#CCF-1217857.}
}
\author{
\textsc{Mihailo Stojnic}
\\
\\
{School of Industrial Engineering}\\
{Purdue University, West Lafayette, IN 47907} \\
{e-mail: {\tt mstojnic@purdue.edu}} }
\date{}
\maketitle

\centerline{{\bf Abstract}} \vspace*{0.1in}

In this paper we look at a particular problem related to under-determined linear systems of equations with sparse solutions. $\ell_1$-minimization is a fairly successful polynomial technique that can in certain statistical scenarios find sparse enough solutions of such systems. Barriers of $\ell_1$ performance are typically referred to as its thresholds. Depending if one is interested in a typical or worst case behavior one then distinguishes between the \emph{weak} thresholds that relate to a typical behavior on one side and the \emph{sectional} and \emph{strong} thresholds that relate to the worst case behavior on the other side. Starting with seminal works \cite{CRT,DonohoPol,DOnoho06CS} a substantial progress has been achieved in theoretical characterization of $\ell_1$-minimization statistical thresholds. More precisely, \cite{CRT,DOnoho06CS} presented for the first time linear lower bounds on all of these thresholds. Donoho's work \cite{DonohoPol} (and our own \cite{StojnicCSetam09,StojnicUpper10}) went a bit further and essentially settled the $\ell_1$'s \emph{weak} thresholds. At the same time they also provided fairly good lower bounds on the values on the \emph{sectional} and \emph{strong} thresholds. In this paper, we revisit the \emph{sectional} thresholds and present a simple mechanism that can be used to create solid upper bounds as well. The method we present relies on a seemingly simple but substantial progress we made in studying Hopfield models in \cite{StojnicHopBnds10}.

\vspace*{0.25in} \noindent {\bf Index Terms: Linear systems of equations;
$\ell_1$-optimization; compressed sensing} .

\end{singlespace}

\section{Introduction}
\label{sec:back}

We start by giving a brief overview of the problem at hand and what we consider as the most relevant mathematical results. In this paper we will be interested in mathematical studying of a particular problem related to under-determined systems of linear equations with sparse solutions. We start by looking at the following system of linear equations
\begin{equation}
A\x=\y, \label{eq:system}
\end{equation}
where $A$ is an $m\times n$ ($m<n$) matrix and $\y$ is
an $m\times 1$ vector. Clearly, as in any linear system the goal is to determine $\x$ if $A$ and $\y$ are given. Given the above dimensions this system is obviously under-determined and for given $A$ and $\y$ the odds are that it will have an infinite number of solutions. In this paper we will be interested in a particular subclass of these systems, namely the one where $y$ is such that (\ref{eq:system}) is satisfied for a $k$-sparse $\x$ and at the same time is not satisfied for any $\x$ that is less than $k$-sparse (here and in the rest of the paper, under $k$-sparse vector we assume a vector that has at most $k$ nonzero
components).

To make writing in the rest of the paper easier, we will assume the
so-called \emph{linear} regime, i.e. we will assume that $k=\beta n$
and that the number of equations is $m=\alpha n$ where
$\alpha$ and $\beta$ are constants independent of $n$ (more
on the non-linear regime, i.e. on the regime when $m$ is larger than
linearly proportional to $k$ can be found in e.g.
\cite{CoMu05,GiStTrVe06,GiStTrVe07}).

There are of course many ways how one can attempt to recover $\x$ in (\ref{eq:system}). Here we only mention a few that are applicable for any matrix $A$.

Typically, the following two algorithms (and their different
variations) have been often viewed historically as solid heuristics for solving (\ref{eq:system}) (in recent years belief propagation type of algorithms are emerging as strong alternatives as well):
\begin{enumerate}
\item \underline{\emph{Orthogonal matching pursuit - OMP}}
\item \underline{\emph{Basis pursuit -
$\ell_1$-optimization.}}
\end{enumerate}
Under certain probabilistic assumptions on the elements of $A$ it can be shown (see e.g. \cite{JATGomp,JAT,NeVe07})
that if $m=O(k\log(n))$
OMP (or slightly modified OMP) can recover $\x$ in (\ref{eq:system})
with complexity of recovery $O(n^2)$. On the other hand a stage-wise
OMP from \cite{DTDSomp} recovers $\x$ in (\ref{eq:system}) with
complexity of recovery $O(n \log n)$. Somewhere in between OMP and BP are recent improvements CoSAMP (see e.g. \cite{NT08}) and Subspace pursuit (see e.g. \cite{DaiMil08}), which guarantee (assuming the linear regime) that the $k$-sparse $\x$ in (\ref{eq:system}) can be recovered in polynomial time with $m=O(k)$ equations. Of course, various other techniques are possible and for that matter have been developed in recent years. However, since this paper is mostly concern with a success of a particular technique we refrain from reviewing further algorithms developed for solving (\ref{eq:system}) and defer that to survey type of papers.

Our interest in this paper is the performance of a technique called $\ell_1$-optimization. (Variations of the standard $\ell_1$-optimization from e.g.
\cite{CWBreweighted,SChretien08,SaZh08}) as well as those from \cite{SCY08,FL08,GN03,GN04,GN07,DG08} related to $\ell_q$-optimization, $0<q<1$
are possible as well.) Basic $\ell_1$-optimization algorithm finds $\x$ in
(\ref{eq:system}) by solving the following $\ell_1$-norm minimization problem
\begin{eqnarray}
\mbox{min} & & \|\x\|_{1}\nonumber \\
\mbox{subject to} & & A\x=\y. \label{eq:l1}
\end{eqnarray}
Due to its popularity the literature on the use of the above algorithm is rapidly growing. We below restrict our attention to two, in our mind, the most influential works that relate to (\ref{eq:l1}).

The first one is \cite{CRT} where the authors were able to show that if
$\alpha$ and $n$ are given, $A$ is given and satisfies the restricted isometry property (RIP) (more on this property the interested reader can find in e.g. \cite{Crip,CRT,Bar,Ver,ALPTJ09}), then
any unknown vector $\x$ with no more than $k=\beta n$ (where $\beta$
is a constant dependent on $\alpha$ and explicitly
calculated in \cite{CRT}) non-zero elements can be recovered by
solving (\ref{eq:l1}).

However, the RIP is only a \emph{sufficient}
condition for $\ell_1$-optimization to produce the $k$-sparse solution of
(\ref{eq:system}). Instead of characterizing $A$ through the RIP
condition, in \cite{DonohoUnsigned,DonohoPol} Donoho looked at its geometric properties/potential. Namely,
in \cite{DonohoUnsigned,DonohoPol} Donoho considered polytope obtained by
projecting the regular $n$-dimensional cross-polytope $C_p^n$ by $A$. He then established that
the solution of (\ref{eq:l1}) will be the $k$-sparse solution of
(\ref{eq:system}) if and only if
$AC_p^n$ is centrally $k$-neighborly
(for the definitions of neighborliness, details of Donoho's approach, and related results the interested reader can consult now already classic references \cite{DonohoUnsigned,DonohoPol,DonohoSigned,DT}). In a nutshell, using the results
of \cite{PMM,AS,BorockyHenk,Ruben,VS}, it is shown in
\cite{DonohoPol}, that if $A$ is a random $m\times n$
ortho-projector matrix then with overwhelming probability $AC_p^n$ is centrally $k$-neighborly (as usual, under overwhelming probability we in this paper assume
a probability that is no more than a number exponentially decaying in $n$ away from $1$). Miraculously, \cite{DonohoPol,DonohoUnsigned} provided a precise characterization of $m$ and $k$ (in a large dimensional context) for which this happens.

It should be noted that one usually considers success of
(\ref{eq:l1}) in recovering \emph{any} given $k$-sparse $\x$ in (\ref{eq:system}). It is also of interest to consider success of
(\ref{eq:l1}) in recovering
\emph{almost any} given $\x$ in (\ref{eq:system}). We below make a distinction between these
cases and recall on some of the definitions from
\cite{DonohoPol,DT,DTciss,DTjams2010,StojnicCSetam09,StojnicICASSP09}.

Clearly, for any given constant $\alpha\leq 1$ there is a maximum
allowable value of $\beta$ such that for \emph{any} given $k$-sparse $\x$ in (\ref{eq:system}) the solution of (\ref{eq:l1})
is with overwhelming probability exactly that given $k$-sparse $\x$. We will refer to this maximum allowable value of
$\beta$ as the \emph{strong} threshold (see
\cite{DonohoPol}). Similarly, for any given constant
$\alpha\leq 1$ and \emph{any} given $\x$ with a given fixed location of non-zero components
there will be a maximum allowable value of $\beta$ such that
(\ref{eq:l1}) finds that given $\x$ in (\ref{eq:system}) with overwhelming
probability. We will refer to this maximum allowable value of
$\beta$ as the \emph{sectional} threshold and will denote it by $\beta_{w}$
Finally, for any given constant
$\alpha\leq 1$ and \emph{any} given $\x$ with a given fixed location of non-zero components and a given fixed combination of its elements signs
there will be a maximum allowable value of $\beta$ such that
(\ref{eq:l1}) finds that given $\x$ in (\ref{eq:system}) with overwhelming
probability. We will refer to this maximum allowable value of
$\beta$ as the \emph{weak} threshold and will denote it by $\beta_{w}$ (see, e.g. \cite{StojnicICASSP09,StojnicCSetam09}).

When viewed within this frame the results of \cite{CRT,DOnoho06CS} established that $\ell_1$-minimization achieves recovery through a linear scaling of all important dimensions ($k$, $m$, and $n$). Moreover, for all $\beta$'s defined above lower bounds were provided in \cite{CRT}. On the other hand, the results of \cite{DonohoPol,DonohoUnsigned} established the exact values of $\beta_w$ and provided lower bounds on $\beta_{str}$ and $\beta_{sec}$.

In a series of our own work (see, e.g. \cite{StojnicICASSP09,StojnicCSetam09,StojnicUpper10}) we then created an alternative probabilistic approach which was capable of providing the precise characterization of $\beta_w$ as well and thereby of reestablishing the results of Donoho \cite{DonohoPol} through a purely probabilistic approach. We also presented in \cite{StojnicCSetam09} further results related to lower bounds on $\beta_{str}$ and $\beta_{sec}$.

Our main subject of interest in this paper is the \emph{sectional} threshold. Before proceeding further with the presentation we find it useful to restate the results from \cite{StojnicCSetam09} that relate to the sectional thresholds $\beta_{sec}$. The following theorem summarizes these results. We will fairly often use the results of this theorem as a sort of benchmark for the results that we will present in this paper.

\begin{theorem}(Sectional threshold - lower bound)
Let $A$ be an $m\times n$ measurement matrix in (\ref{eq:system})
with the null-space uniformly distributed in the Grassmanian. Let
the unknown $\x$ in (\ref{eq:system}) be $k$-sparse. Further, let the location of nonzero elements of $\x$ be arbitrarily chosen but fixed.
Let $k,m,n$ be large
and let $\alpha=\frac{m}{n}$ and $\betasec=\frac{k}{n}$ be constants
independent of $m$ and $n$. Let $\erfinv$ be the inverse of the standard error function associated with zero-mean unit variance Gaussian random variable.  Further,
let $\epsilon>0$ be an arbitrarily small constant and $\hthetasec$, ($\betasec\leq \hthetasec\leq 1$) be the solution of
\begin{equation}
(1-\epsilon)(1-\betasec)\frac{\sqrt{\frac{2}{\pi}}e^{-(\erfinv(\frac{1-\thetasec}{1-\betasec}))^2}-\sqrt{\frac{2}{\pi}}\frac{\betasec}{1-\betasec}}
{\thetasec}-\sqrt{2}\erfinv ((1+\epsilon)\frac{1-\thetasec}{1-\betasec})=0.\label{eq:thmsectheta}
\end{equation}
If $\alpha$ and $\betasec$ further satisfy
\begin{equation}
\hspace{-.6in}\alpha>\frac{1-\betasec}{\sqrt{2\pi}}\left (\sqrt{2\pi}+2\frac{\sqrt{2(\erfinv(\frac{1-\hthetasec}{1-\betasec}))^2}}{e^{(\erfinv(\frac{1-\hthetasec}{1-\betasec}))^2}}-\sqrt{2\pi}
\frac{1-\hthetasec}{1-\betasec}\right )+\betasec
-\frac{\left ((1-\betasec)\sqrt{\frac{2}{\pi}}e^{-(\erfinv(\frac{1-\hthetasec}{1-\betasec}))^2}-\sqrt{\frac{2}{\pi}}\betasec\right )^2}{\hthetasec}\label{eq:thmsecalpha}
\end{equation}
then with overwhelming probability the solution of (\ref{eq:l1}) is the $k$-sparse $\x$ from (\ref{eq:system}).
\label{thm:thmsecthr}
\end{theorem}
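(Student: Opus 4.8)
The plan is to reduce the recovery statement to a purely geometric condition on the null-space of $A$ and then control the resulting Gaussian process by a Gordon-type comparison inequality. First I would record the standard null-space characterization of sectional recovery. Fix the support $I$ of $\x$ with $|I|=k$. Writing any competitor as $\x+\w$ with $A\w=\0$, one has $\|\x+\w\|_1-\|\x\|_1\geq\|\w_{I^c}\|_1-\|\w_{I}\|_1$, with equality attainable for a suitable choice of the signs and magnitudes of $\x$ on $I$. Hence $\ell_1$ recovers \emph{every} $\x$ supported on $I$ if and only if every nonzero $\w$ in the null-space of $A$ satisfies
\begin{equation*}
\|\w_{I}\|_1<\|\w_{I^c}\|_1 .
\end{equation*}
Since this condition is scale-invariant, the requirement is equivalent to the (uniformly random) null-space avoiding the symmetric spherical set $C=\{\w\in S^{n-1}:\|\w_{I}\|_1\geq\|\w_{I^c}\|_1\}$.

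Second, I would turn ``the null-space avoids $C$'' into the positivity of a Gaussian min-max. Because $A\w=\0$ is impossible on $C$ as soon as $\min_{\w\in C}\|A\w\|_2>0$, and $\|A\w\|_2=\max_{\|\y\|_2=1}\y^TA\w$, it suffices to show
\begin{equation*}
\min_{\w\in C}\ \max_{\|\y\|_2=1}\ \y^TA\w>0
\end{equation*}
with overwhelming probability. Applying Gordon's comparison inequality to this bilinear Gaussian form, with auxiliary standard Gaussian vectors $\g\in\mathbb{R}^m$ and $\h\in\mathbb{R}^n$, lower bounds the expectation of the left-hand side by $\Exp\min_{\w\in C}(\|\g\|_2+\h^T\w)=\Exp\|\g\|_2-w(C)$, where $w(C)=\Exp\max_{\w\in C}\h^T\w$ is the Gaussian width of $C$ (using the symmetry of $C$ to replace the minimum by $-w(C)$). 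Since $\Exp\|\g\|_2\approx\sqrt{m}=\sqrt{\alpha n}$, the min-max is positive once $\alpha n>w(C)^2$, and Lipschitz concentration of both $\min_{\w\in C}\|A\w\|_2$ and of $\h\mapsto\max_{\w\in C}\h^T\w$ upgrades this to an overwhelming-probability statement. This is precisely where the sharpened estimates we developed for Hopfield models in \cite{StojnicHopBnds10} are brought to bear.

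Third, the substance of the argument is the evaluation of $w(C)$ in the linear regime. I would compute
\begin{equation*}
w(C)=\Exp\max\{\h^T\w:\ \|\w\|_2\leq1,\ \|\w_{I}\|_1\geq\|\w_{I^c}\|_1\}
\end{equation*}
by dualizing the single scalar constraint $\|\w_{I^c}\|_1-\|\w_{I}\|_1\leq0$ with a Lagrange multiplier, which decouples the maximization across the coordinates of $I$ and of $I^c$ and reduces it to a one-dimensional water-filling governed by the order statistics of the $|h_i|$. Passing to the large-$n$ limit and replacing the empirical quantile thresholds by their Gaussian counterparts produces the $\erfinv$ expressions: the optimal threshold level is the quantity $\hthetasec$ fixed by the stationarity condition (\ref{eq:thmsectheta}), and substituting it back into $w(C)^2/n$ yields exactly the right-hand side of (\ref{eq:thmsecalpha}). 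The condition $\alpha n>w(C)^2$ then becomes (\ref{eq:thmsecalpha}), while the $(1\pm\epsilon)$ factors absorb the discrepancy between empirical and limiting quantiles together with the $o(n)$ slack in the concentration estimates.

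The main obstacle will be this third step: carrying out the constrained Gaussian-width optimization rigorously, i.e. justifying the interchange of expectation and maximization, verifying that the Lagrangian dual is tight for the cone constraint, and controlling the convergence of the order statistics of the $|h_i|$ to their Gaussian-quantile limits uniformly enough to land on the exact transcendental form of (\ref{eq:thmsectheta})--(\ref{eq:thmsecalpha}). Steps one and two are structurally routine once Gordon's inequality is in hand; the delicate constants and the explicit characterization of $\hthetasec$ all originate in the third step, which is the analytic heart shared with the Hopfield-model computation of \cite{StojnicHopBnds10}.
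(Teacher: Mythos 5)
Your proposal follows essentially the paper's own route: Theorem \ref{thm:thmsecthr} is stated here without proof, being imported as a benchmark from \cite{StojnicCSetam09}, and the argument there is precisely your three steps --- the sectional null-space characterization (the first part of Theorem \ref{thm:thmgenweak}), a Gordon-type min-max (escape-through-a-mesh) comparison reducing positivity of $\min_{\w\in C}\|A\w\|_2$ over the failure set $C$ on the sphere to $\sqrt{m}$ exceeding the Gaussian width of $C$, and the Lagrangian/quantile evaluation of that width, whose stationarity condition is exactly (\ref{eq:thmsectheta}) and whose squared value yields the right-hand side of (\ref{eq:thmsecalpha}), with the $(1\pm\epsilon)$ factors playing the concentration-slack role you assign them. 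The only cosmetic difference is that the theorem assumes a null space uniform on the Grassmannian rather than an i.i.d. Gaussian $A$, which your reduction handles correctly by rotational invariance; note also that the Hopfield-model bounds of \cite{StojnicHopBnds10} are not needed for this lower bound (they enter only in this paper's upper-bound mechanism), so your attribution of the concentration step to that reference is the one inessential misplacement.
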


The above theorem was obtained in \cite{StojnicCSetam09} through a novel probabilistic framework for performance characterization of (\ref{eq:l1}). Using that framework we obtained lower bounds on $\beta_{sec}$. These lower bounds are not exact. In this paper we design a mechanism that can be used to compute the upper bounds on $\beta_{sec}$. The obtained upper bounds will obviously not match the lower bounds computed in \cite{StojnicCSetam09} but are relatively simple to compute and can provide a quick assessment as to how far off from the optimal are in the worst the results obtained for sectional thresholds in \cite{StojnicCSetam09}.

Although studying the weak thresholds is not the subject of this paper, we should as a side point mention that the weak thresholds computed in \cite{StojnicCSetam09} were confirmed in \cite{StojnicUpper10,StojnicEquiv10} to be the exact ones. In this paper we will also utilize to a degree the upper-bounding methodology of \cite{StojnicUpper10}. However, a few further insights are needed to make the mechanism we are about to present work and those became available only after we made a simple but important progress in studying a class of Hopfield models from statistical physics in \cite{StojnicHopBnds10}.

We organize the rest of the paper in the following way. In Section
\ref{sec:unsigned} we create a mechanism for computing the upper bounds on $\beta_{sec}$ for a class of random matrices $A$. In Section \ref{sec:simul} we present a collection of numerical results that aim at estimating how far off are our upper bounds from true $\beta_{sec}$. Finally, in Section \ref{sec:discuss} we discuss obtained results.

\section{Upper-bounding $\beta_{sec}$} \label{sec:unsigned}

In this section we present the mechanism for upper-bounding the sectional thresholds. We first recall on a sectional type of optimality characterization of (\ref{eq:l1}). Such a characterization is completely deterministic. We in the second part of this section then probabilistically analyze the obtained characterization.

\subsection{Deterministic part} \label{sec:det}

Namely, we look at a null-space characterization of
$A$ that guarantees (in a sectional sense) that the solution of
(\ref{eq:l1}) is the $k$-sparse solution of (\ref{eq:system}). To be more precise, the characterization will establish a condition which is equivalent to having the solution of
(\ref{eq:l1}) be the $k$-sparse solution of (\ref{eq:system}) for any $\beta n$-sparse $\x$ with a fixed location of nonzero components. Since the analysis will clearly be irrelevant with respect to what particular location is chosen, we can for the simplicity of the exposition and without loss of generality assume that the components $\x_{1},\x_{2},\dots,\x_{n-k}$ of $\x$ are equal to zero and the components $\x_{n-k+1},\x_{n-k+2},\dots,\x_n$ of $\x$ are larger than or equal to zero. Under this assumption we have the following theorem from \cite{StojnicICASSP09} that provides
such a characterization (while the corresponding weak threshold characterization was introduced for the first time in \cite{StojnicICASSP09}, the sectional characterization we need here was by no means derived in \cite{StojnicICASSP09} for the first time; similar sectional/strong threshold characterizations were obtained way earlier, see e.g.
\cite{DH01,FN,LN,Y,XHapp,SPH,DTbern}; furthermore, if instead of $\ell_1$ one, for
example, uses an $\ell_q$-optimization ($0<q<1$) in (\ref{eq:l1}) then
characterizations similar to the ones from
\cite{DH01,FN,LN,Y,XHapp,SPH,DTbern} can be derived as well
\cite{GN03,GN04,GN07}).
\begin{theorem}(Nonzero part of $\x$ has fixed location)
Assume that an $m\times n$ matrix $A$ is given. Let $\x$
be a $k$-sparse vector. Also let $\x_1=\x_2=\dots=\x_{n-k}=0.$
Further, assume that $\y=A\x$ and that $\w$ is
an $n\times 1$ vector. If
\begin{equation}
(\forall \w\in \textbf{R}^n | A\w=0) \quad  \sum_{i=n-k+1}^n |\w_i|<\sum_{i=1}^{n-k}|\w_{i}|
\label{eq:thmeqgenweak1}
\end{equation}
then the solution of (\ref{eq:l1}) is $\x$. Moreover, if
\begin{equation}
(\exists \w\in \textbf{R}^n | A\w=0) \quad  \sum_{i=n-k+1}^n |\w_i|>\sum_{i=1}^{n-k}|\w_{i}|
\label{eq:thmeqgenweak2}
\end{equation}
then there will be a $k$-sparse $\x$ that satisfies (\ref{eq:system}) and is not the solution of (\ref{eq:l1}).
\label{thm:thmgenweak}
\end{theorem}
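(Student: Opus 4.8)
The plan is to read Theorem~\ref{thm:thmgenweak} as a null-space property statement and to prove the two implications separately, in each case reducing everything to a single vector in the kernel of $A$.

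For the sufficiency direction (\ref{eq:thmeqgenweak1}), I would start from an arbitrary feasible competitor $\hatx$, i.e. any vector with $A\hatx=\y=A\x$, and set $\w=\hatx-\x$. Since $A\w=A\hatx-A\x=\0$, the perturbation $\w$ lies in the null-space of $A$, so the hypothesis applies to it. The key step is then to split the objective $\|\hatx\|_1$ across the two coordinate blocks: on the first $n-k$ coordinates $\x$ vanishes, so $\hatx_i=\w_i$ and the two $\ell_1$ contributions coincide there; on the support coordinates $i\in\{n-k+1,\dots,n\}$ the reverse triangle inequality gives $|\hatx_i|=|\x_i+\w_i|\geq|\x_i|-|\w_i|$. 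Summing the two blocks yields
\begin{equation}
\|\hatx\|_1\geq\|\x\|_1+\left(\sum_{i=1}^{n-k}|\w_i|-\sum_{i=n-k+1}^n|\w_i|\right),
\end{equation}
and the parenthesized quantity is strictly positive for every nonzero $\w$ by (\ref{eq:thmeqgenweak1}). Hence $\|\hatx\|_1>\|\x\|_1$ whenever $\hatx\neq\x$, which establishes that $\x$ is the unique solution of (\ref{eq:l1}).

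For the converse direction (\ref{eq:thmeqgenweak2}), I would produce an explicit counterexample from the offending null-space vector $\w$. Define $\x$ by placing $-\w$ on the last $k$ coordinates and zeros elsewhere, so that $\x$ is $k$-sparse with $\x_1=\dots=\x_{n-k}=0$, and set $\y=A\x$. Then $\hatx=\x+\w$ is feasible because $A\hatx=A\x+A\w=\y$, and a direct computation gives $\|\hatx\|_1=\sum_{i=1}^{n-k}|\w_i|$ (the support coordinates cancel) while $\|\x\|_1=\sum_{i=n-k+1}^n|\w_i|$. The hypothesis (\ref{eq:thmeqgenweak2}) then forces $\|\hatx\|_1<\|\x\|_1$, so this $\x$ is a $k$-sparse solution of (\ref{eq:system}) that (\ref{eq:l1}) fails to recover.

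The argument is elementary and no single step is a serious obstacle; the only points requiring care are bookkeeping ones. In the sufficiency part I must phrase (\ref{eq:thmeqgenweak1}) for nonzero $\w$ so that the strict inequality delivers uniqueness rather than mere optimality. In the converse part I must check that the constructed $\x$ genuinely has at most $k$ nonzero entries (it does, since its support sits inside the last $k$ coordinates, regardless of whether some $\w_i$ vanish there) and that arbitrary signs on the nonzero block are permissible, which is exactly what the sectional setting allows.
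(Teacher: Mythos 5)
Your proposal is correct and takes essentially the same route as the paper: your sufficiency argument is the paper's kernel-perturbation estimate (the same block-wise split with the reverse triangle inequality on the support, which the paper merely phrases as a contradiction rather than a direct lower bound on $\|\hatx\|_1$), and your converse construction $\x_j=-\w_j$ on the last $k$ coordinates with zeros elsewhere is exactly the paper's ``moreover'' argument. Your explicit caveat that (\ref{eq:thmeqgenweak1}) must be read over nonzero $\w$ to get uniqueness is a minor bookkeeping point the paper leaves implicit.
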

\begin{proof}
The first part follows directly from Theorem $2$ in \cite{StojnicICASSP09} by viewing a particular subset of locations. For the completeness we just sketch the argument again. Let $\hatx$ be the solution of (\ref{eq:l1}). We want to show that if (\ref{eq:thmeqgenweak1}) holds then $\hatx=\x$. To that end assume opposite, i.e. assume that (\ref{eq:thmeqgenweak1}) holds but $\hatx\neq\x$. Then since $\y=A\hatx$ and $\y=A\x$ one must have $\hatx =\x+\w$ with $\w$ such that $A\w=0$. Also, since $\hatx$ is the solution of (\ref{eq:l1}) one has that
\begin{equation}
\sum_{i=1}^n|\x_i+\w_i|\leq \sum_{i=1}^{n}|\x_i|.\label{eq:absval}
\end{equation}
Then the following must hold as well
\begin{equation}
\sum_{i=1}^{n-k} |\w_i|-\sum_{i=n-k+1}^n |\w_i|\leq 0.\label{eq:wcon}
\end{equation}
or equivalently
\begin{equation}
\sum_{i=1}^{n-k} |\w_i|\leq\sum_{i=n-k+1}^n |\w_i|.\label{eq:wcon1}
\end{equation}
Clearly, (\ref{eq:wcon1}) contradicts (\ref{eq:thmeqgenweak1}) and $\hatx\neq\x$ can not hold. Therefore $\hatx=\x$ which is exactly what the first part of the theorem claims.

For the ``moreover" part assume that (\ref{eq:thmeqgenweak2}) holds, i.e. we assume
\begin{equation}
(\exists \w\in \textbf{R}^n | A\w=0) \quad  \sum_{i=n-k+1}^n \w_i>\sum_{i=1}^{n-k}|\w_{i}|\label{eq:thmeqgenweak3}
\end{equation}
and want to show that there is a $k$-sparse $\x$ with $\x_1=\x_2=\dots=\x_{n-k}=0$ such that (\ref{eq:absval}) holds (with a strict inequality). This would imply that there is a $\x$ with $\x_1=\x_2=\dots=\x_{n-k}=0$ such that $A\x=\y$ and $\x$ is not the solution of (\ref{eq:l1}). Since (\ref{eq:wcon1}) is just rewritten (\ref{eq:thmeqgenweak3}) one can go backwards from (\ref{eq:wcon1}) to (\ref{eq:absval}) (just additionally making all the inequalities strict in the process). Then for $\x$ such that $\x_j=0$ for $1\leq j\leq n-k$,  $\x_j=-\w_j,n-k+1\leq j\leq n$ one has that (\ref{eq:thmeqgenweak3}) implies
\begin{equation}
\sum_{i=1}^n|\x_i+\w_i|< \sum_{i=1}^{n}|\x_i|.\label{eq:absvalrev}
\end{equation}
or in other words that $\x$ can not be the solution of (\ref{eq:l1}). This concludes the proof of the second (``moreover") part.
\end{proof}

We believe that a few comments are in order. Clearly, the first part of the above theorem is the characterization that was used to obtain the lower bounds on the sectional thresholds in \cite{StojnicCSetam09} (and way earlier in \cite{DonohoPol}). The second part may seem somewhat novel when it comes to its use in sectional thresholds characterizations. However, we should emphasize that its statement and proof are nothing original (see, e.g. \cite{DH01,GN03}). On the other hand, as mentioned above we have hardly ever seen any use of the second part before. Of course that is somewhat expected as long as one is concerned with the lower bounds. However, as the reader might guess, if one is concerned with proving the upper bounds the second part of the above theorem becomes the same type of the key proving strategy component that the first part was in the framework of \cite{StojnicCSetam09}. Below we use it to create a machinery almost as powerful as the one from \cite{StojnicCSetam09} that provides the corresponding framework for upper-bounding the sectional thresholds.

\subsection{Probabilistic part} \label{sec:prob}

In this section we probabilistically analyze validity of the null-space characterization given in the second part of Theorem \ref{thm:thmgenweak}. Essentially, we will design a mechanism for computing upper bounds on $\beta_{sec}$ (in fact, since it will be slightly more convenient we will actually determine lower bounds on $\alpha$; that is of course conceptually  the same as finding the upper-bounds on $\beta_{sec}$). In the first part of this subsection we will closely follow the strategy presented in \cite{StojnicUpper10} used to obtain upper bounds on the weak thresholds.

We start by defining a quantity $\tau$ that will play one of the key roles below
\begin{eqnarray}
\tau(A) =  \min & & (\sum_{i=1}^{n-k}|\w_i|-\sum_{i=n-k+1}^{n}|\w_i|) \nonumber \\
\mbox{subject to} & &  A\w=0\nonumber \\
& & \|\w\|_2\leq 1.\label{eq:deftau}
\end{eqnarray}
Now, we will in the rest of the paper assume that the entries of $A$ i.i.d. standard normal random variables. Then one can say that for any $\alpha$ and $\beta$ for which
\begin{equation}
\lim_{n\rightarrow\infty}P(\tau(A)<0)=1, \label{eq:mcritpr}
\end{equation}
there is a $k$-sparse $\x$ (from a set of $\x$'s with a given fixed location of nonzero components) which (\ref{eq:l1}) with probability $1$ fails to find. For a fixed $\beta$ our goal will be to find the largest possible $\alpha$ for which (\ref{eq:mcrit}) holds, i.e. for which (\ref{eq:l1}) fails with probability $1$. As is now well known based on the machinery developed in a series of our work \cite{StojnicCSetam09,StojnicUpper10} all random quantities of interest will concentrate and one can instead of looking at (\ref{eq:mcrit}) look at the alternative condition
\begin{equation}
\lim_{n\rightarrow\infty}\frac{E\tau(A)}{\sqrt{n}}<0. \label{eq:mcrit}
\end{equation}
Before going through the randomness of the problem and evaluation of $\lim_{n\rightarrow\infty}\frac{E\tau(A)}{\sqrt{n}}$ (and ultimately $P(\tau(A)<0)$) we will try to provide a more explicit expression for $\tau$ than the one given by the optimization problem in (\ref{eq:deftau}). We proceed by slightly rephrasing (\ref{eq:deftau}):
\begin{eqnarray}
\tau(A) = \min_{\b_i^2=1} \min_{\t,\w} & & (\sum_{i=1}^{n-k}\t_i-\sum_{i=n-k+1}^{n}\b_i\w_i) \nonumber \\
\mbox{subject to} & & -\t_i\leq \w_i\leq \t_i, 1\leq i\leq n-k \nonumber \\
& &  A\w=0\nonumber \\
& & \|\w\|_2\leq 1.\label{eq:deftau1}
\end{eqnarray}
We then write further
\begin{equation}
\tau(A) = \min_{\b_i^2=1} \tau_w(A,\b),\label{eq:deftau11}
\end{equation}
where
\begin{eqnarray}
\tau_w(A,\b) = \min_{\t,\w} & & (\sum_{i=1}^{n-k}\t_i-\sum_{i=n-k+1}^{n}\b_i\w_i) \nonumber \\
\mbox{subject to} & & -\t_i\leq \w_i\leq \t_i, 1\leq i\leq n-k \nonumber \\
& &  A\w=0\nonumber \\
& & \|\w\|_2\leq 1.\label{eq:deftau12}
\end{eqnarray}
Now, one can closely follow what was done in \cite{StojnicUpper10} between equations $(14)$ and $(25)$ to arrive to the following analogue of \cite{StojnicUpper10}'s $(25)$
\begin{eqnarray}
\tau_w(A,\b)= \max_{\z,\nu}&  &
-\|\z-A^T \nu\|_2  \nonumber \\
\mbox{subject to}& & |\z_i|\leq 1, 1\leq i\leq n-k\nonumber \\
& & \z_i=-\b_i, n-k+1\leq i\leq n.
\label{eq:deftau11}
\end{eqnarray}
or in a more convenient form
\begin{eqnarray}
\tau_w(A,\b)= -\min_{\z,\nu}&  &
\|\z-A^T \nu\|_2  \nonumber \\
\mbox{subject to}& & |\z_i|\leq 1, 1\leq i\leq n-k\nonumber \\
& & \z_i=-\b_i, n-k+1\leq i\leq n.
\label{eq:deftau12}
\end{eqnarray}
Now, we proceed by solving the inner minimization over $\nu$. To that end we write
\begin{eqnarray}
\tau_w(A,\b)^2= -\min_{\z}\min_{\nu}&  &
\nu^T AA^T\nu-2\z^TA^T\nu +\|\z\|_2^2 \nonumber \\
\mbox{subject to}& & |\z_i|\leq 1, 1\leq i\leq n-k\nonumber \\
& & \z_i=-\b_i, n-k+1\leq i\leq n.
\label{eq:deftau120}
\end{eqnarray}
Since
\begin{eqnarray}
\min_{\nu}
\nu^T AA^T\nu-2\z^TA^T\nu=-\z^TA^T(AA^T)^{-1}A\z,
\label{eq:deftau121}
\end{eqnarray}
one then from (\ref{eq:deftau120}) has
\begin{eqnarray}
\tau_w(A,\b)^2= -\min_{\z}&  &
-\z^TA^T(AA^T)^{-1}A\z+\|\z\|_2^2 \nonumber \\
\mbox{subject to}& & |\z_i|\leq 1, 1\leq i\leq n-k\nonumber \\
& & \z_i=-\b_i, n-k+1\leq i\leq n,
\label{eq:deftau122}
\end{eqnarray}
and alternatively
\begin{eqnarray}
\tau_w(A,\b)^2= -\min_{\z}&  &
\z^T(I-A^T(AA^T)^{-1}A)\z \nonumber \\
\mbox{subject to}& & |\z_i|\leq 1, 1\leq i\leq n-k\nonumber \\
& & \z_i=-\b_i, n-k+1\leq i\leq n.
\label{eq:deftau123}
\end{eqnarray}
Now, we look at the SVD decomposition of $A$
\begin{equation}
A=SVD^T,\label{eq:Asvd}
\end{equation}
where $S$ is an $m\times m$ matrix such that $SS^T=I$, $V$ is a diagonal matrix of singular values of $A$, and $D$ is an $n\times m$ matrix such that $D^TD=I$. Then
\begin{equation}
A^T(AA^T)^{-1}A=DVS^T(SV^2S^T)^{-1}SVD^T,\label{eq:Asvd1}
\end{equation}
and
\begin{equation}
A^T(AA^T)^{-1}A=DD^T.\label{eq:Asvd2}
\end{equation}
Let $D^{\perp}$ be an $(n-m)\times n$ matrix such that
\begin{equation}
\begin{bmatrix}D & D^{\perp}\end{bmatrix}^T\begin{bmatrix}D & D^{\perp}\end{bmatrix}=I.\label{eq:Asvd3}
\end{equation}
Then one also has
\begin{equation}
\begin{bmatrix}D & D^{\perp}\end{bmatrix}\begin{bmatrix}D & D^{\perp}\end{bmatrix}^T=I,\label{eq:Asvd4}
\end{equation}
or in other words
\begin{equation}
I-DD^T = D^{\perp}(D^{\perp})^T.\label{eq:Asvd6}
\end{equation}
Using (\ref{eq:Asvd6}), (\ref{eq:deftau123}) becomes
\begin{eqnarray}
\tau_w(A,\b)^2= -\min_{\z}&  &
\z^T((D^{\perp})^TD^{\perp})\z \nonumber \\
\mbox{subject to}& & |\z_i|\leq 1, 1\leq i\leq n-k\nonumber \\
& & \z_i=-\b_i, n-k+1\leq i\leq n,
\label{eq:deftau124}
\end{eqnarray}
and obviously
\begin{eqnarray}
\tau_w(A,\b)= -\min_{\z}&  &
\|D^{\perp}\z\|_2 \nonumber \\
\mbox{subject to}& & |\z_i|\leq 1, 1\leq i\leq n-k\nonumber \\
& & \z_i=-\b_i, n-k+1\leq i\leq n.
\label{eq:deftau125}
\end{eqnarray}
Given the rotational invariance of Gaussian matrices and the fact that one is ultimately only interested in the sign of $\tau_w(A,\b)$, from a statistical point of view one can then replace $D^{\perp}$ with an $(n-m)\times n$ matrix $A^{(w)}$ with i.i.d. standard normal components. One can then write
\begin{eqnarray}
\tau(A)= -\max_{\b_i^2=1}\min_{\z}&  &
\|D^{\perp}\z\|_2 \nonumber \\
\mbox{subject to}& & |\z_i|\leq 1, 1\leq i\leq n-k\nonumber \\
& & \z_i=-\b_i, n-k+1\leq i\leq n,
\label{eq:deftau126}
\end{eqnarray}
and
\begin{eqnarray}
\tau^{(g)}(A)= -\max_{\b_i^2=1}\min_{\z}&  &
\|A^{(w)}\z\|_2 \nonumber \\
\mbox{subject to}& & |\z_i|\leq 1, 1\leq i\leq n-k\nonumber \\
& & \z_i=-\b_i, n-k+1\leq i\leq n.
\label{eq:deftau127}
\end{eqnarray}
and
\begin{equation}
\lim_{n\rightarrow \infty}\frac{\mbox{sign}(E\tau(A))}{\sqrt{n}}=\lim_{n\rightarrow \infty}\frac{\mbox{sign}(E\tau^{(g)}(A))}{n}.\label{eq:gausstau}
\end{equation}
This essentially means that one can switch to the analysis of the quantity on the right hand side of (\ref{eq:gausstau}). We then have
\begin{eqnarray}
\lim_{n\rightarrow\infty}\frac{E\tau^{(g)}(A)}{n} = \lim_{n\rightarrow\infty}-E\max_{\b\in\{-1,1\}^k}\min_{\z_{1:n-k}}&  & \frac{1}{n}
\|A_{:,n-k+1:n}^{(w)}\b +A_{:,1:n-k}^{(w)}\z_{1:n-k}\|_2 \nonumber \\
\mbox{subject to}& & |\z_i|\leq 1, 1\leq i\leq n-k,
\label{eq:deftau125}
\end{eqnarray}
where
\begin{equation}
\b=[\b_{n-k+1},\b_{n-k+2},\dots,\b_{n}],\label{eq:defvecb}
\end{equation}
$A_{:,1:n-k}^{(w)}$ is a submatrix of $A^{(w)}$ obtained by extracting columns $\{1,2,\dots,n-k\}$, $A_{:,n-k+1:n}^{(w)}$ is a submatrix of $A^{(w)}$ obtained by extracting columns $\{n-k+1,n-k+2,\dots,n\}$, and analogously $\z_{1:n-k}$ is a vector obtained by extracting components $\{1,2,\dots,n-k\}$ of $\z$. From (\ref{eq:deftau125}) we obtain
\begin{eqnarray}
\lim_{n\rightarrow\infty}\frac{E\tau^{(g)}(A)}{n}\leq \lim_{n\rightarrow\infty}-E\min_{\z}&  &
\|(\max_{\b\in\{-1,1\}^k}\|A_{:,n-k+1:n}^{(w)}\b\|_2) \a +A_{:,1:n-k}^{(w)}\z_{1:n-k}\|_2 \nonumber \\
\mbox{subject to}& & |\z_i|\leq 1, 1\leq i\leq n-k,
\label{eq:deftau128}
\end{eqnarray}
where $\a$ is an arbitrary $(n-m)\times 1$ unit norm constant vector. Given statistical independence of columns of $A$ one can first condition on $A_{:,n-k+1:n}^{(w)}$ and set
\begin{equation}
\xi_{n}=\lim_{n\rightarrow \infty}\frac{E(\max_{\b\in\{-1,1\}^k}\|A_{:,n-k+1:n}^{(w)}\b\|_2)}{n}.\label{eq:xihopmax}
\end{equation}
Then from (\ref{eq:deftau128}) we have
\begin{eqnarray}
\lim_{n\rightarrow \infty}\frac{E\tau^{(g)}(A)}{n}\leq \lim_{n\rightarrow\infty}-E\min_{\z}&  & \frac{1}{n}
\|\xi_{n}n \a +A_{:,1:n-k}^{(w)}\z_{1:n-k}\|_2 \nonumber \\
\mbox{subject to}& & |\z_i|\leq 1, 1\leq i\leq n-k.
\label{eq:deftau129}
\end{eqnarray}
Now, if one can indeed compute $\xi_{n}$ we would have a mechanism to establish the condition for negativity of $\lim_{n\rightarrow \infty}\frac{E\tau^{(g)}(A)}{n}$. Computing $\xi_{n}$ is not easy, though. However, following \cite{StojnicHopBnds10} one can design lower and upper bounds on $\xi_{n}$. In fact, here it turns out that the lower bounds are what we need. Using the results of \cite{StojnicHopBnds10} one then has
\begin{equation}
\lim_{n\rightarrow \infty}\frac{k(\sqrt{\frac{1-\alpha}{\beta}}+\xi_{SK})}{n}=\xi_{n}^{(l)}\leq \xi_{n}=\lim_{n\rightarrow \infty}\frac{E(\max_{\b\in\{-1,1\}^k}\|A_{:,n-k+1:n}^{(w)}\b\|_2)}{n},\label{eq:xihopmax1}
\end{equation}
where
\begin{equation}
\xi_{SK}=\lim_{n\rightarrow \infty}\frac{E(\max_{x\in\{-1,1\}^n}\x^T G\x)}{\sqrt{2}\sqrt{n}}\approx 0.7632,\label{eq:defxisk}
\end{equation}
and $G$ is an $n\times n$ matrix with i.i.d. standard normal components.

Then from (\ref{eq:deftau129}) we have
\begin{eqnarray}
\lim_{n\rightarrow \infty}\frac{E\tau^{(g)}(A)}{n}\leq \lim_{n\rightarrow\infty}-E\min_{\z}&  & \frac{1}{n}
\|\xi_{n}^{(l)}n \a +A_{:,1:n-k}^{(w)}\z_{1:n-k}\|_2 \nonumber \\
\mbox{subject to}& & |\z_i|\leq 1, 1\leq i\leq n-k.
\label{eq:deftau1290}
\end{eqnarray}
Before we present the way to handle (\ref{eq:deftau1290}) which is a bit tricky but in our view quite beautiful, we will briefly sketch a standard way how one could proceed based on the mechanisms from \cite{StojnicCSetam09,StojnicUpper10}. Using the mechanisms of \cite{StojnicCSetam09} one can then establish the following upper bound on the right hand side of (\ref{eq:deftau1290}) (in fact, using the machinery of \cite{StojnicCSetam09,StojnicUpper10,StojnicRegRndDlt10} one can actually show that the following upper bound is actually equal to the right hand side of \ref{eq:deftau1290})
\begin{eqnarray}
\lim_{n\rightarrow \infty}\frac{E\tau^{(g)}(A)}{n}\leq \lim_{n\rightarrow\infty} -E\min_{\z}\max_{\q}&  & \frac{1}{n}
(\xi_{n}^{(l)}n \q^T\a +\q^T\g\|\z_{1:n-k}\|_2+\h^T\z_{1:n-k}) \nonumber \\
\mbox{subject to}& & |\z_i|\leq 1, 1\leq i\leq n-k,
\label{eq:deftau1291}
\end{eqnarray}
or alternatively
\begin{eqnarray}
\lim_{n\rightarrow \infty}\frac{E\tau^{(g)}(A)}{n}\leq \lim_{n\rightarrow\infty} -E\min_{\z}&  & \frac{1}{n}
(\sqrt{(\xi_{n}^{(l)}n)^2 +\|\g\|_2^2\|\z_{1:n-k}\|_2^2}+\h^T\z_{1:n-k}) \nonumber \\
\mbox{subject to}& & |\z_i|\leq 1, 1\leq i\leq n-k.
\label{eq:deftau1292}
\end{eqnarray}
The above optimization on the right hand side of the inequality can be solved (it is not that hard but it is a bit involved). In fact, that is how we initially handled (\ref{eq:deftau1290}). However, we then found an alternative way to handle (\ref{eq:deftau1290}) which, as stated above, we consider way more beautiful than the  standard combination of (\ref{eq:deftau1291}) and (\ref{eq:deftau1292}). While we believe that the results should be presented in the original form that tightly follows the way we created them, we simply appreciate the beauty of the alternative method so much that we decided to present that method. Moreover, the form of the final result (although analytically the same as what can be obtained through (\ref{eq:deftau1292})) is way more beautiful in the tricky method that we present below.

Essentially, to handle (\ref{eq:deftau1290}), one can recognize that term $\xi_n^{(l)}n\a$ can, from the statistical point of view, be replaced by $A^{(w,g)}\z^{(g)}$ where $A^{(w,g)}$ is an $(n-m)\times k^{(g)}$ matrix of i.i.d. standard normals (obviously independent of $A^{(w)}$ as well), $\z^{(g)}$ is an $k^{(g)}\times 1$ vector of all $-1$'s ($1$'s work as well; however to make in what follows more obvious the parallel with the results from \cite{StojnicUpper10} $-1$'s work better), and $k^{(g)}$ is
such that
\begin{equation}
k^{(g)}(n-m)=(\xi_n^{(l)})^2n^2.\label{eq:defkg}
\end{equation}
The above condition is obtained from the following line of the identities
\begin{equation}
k^{(g)}(n-m)=E\|A^{(w,g)}\z^{(g)}\|_2^2=(\xi_{n}^{(l)}n)^2\|\a\|_2^2=(\xi_{n}^{(l)})^2n^2.\label{eq:defkg1}
\end{equation}
One can then rewrite (\ref{eq:deftau1290}) as
\begin{eqnarray}
\lim_{n\rightarrow \infty}\frac{E\tau^{(g)}(A)}{n}\leq \lim_{n\rightarrow\infty} -E\min_{\z_{1:n-k},\z^{(g)}}&  & \frac{1}{n}
\|A^{(w,g)}\z^{(g)} +A_{:,1:n-k}^{(w)}\z_{1:n-k}\|_2 \nonumber \\
\mbox{subject to}& & |\z_i|\leq 1, 1\leq i\leq n-k\nonumber\\
& & \z_i^{(g)}=-1, 1\leq i\leq k^{(g)}.
\label{eq:deftau12901}
\end{eqnarray}
(Conditioning on $A^{(w,g)}$ and noting that $\z^{(g)}$ is fixed essentially affirms our above assertion.)
One can then recognize that the optimization on the right hand side is exactly of the same type as the one in (\ref{eq:deftau127}) with $\b_i=1$ which is what one would get applying steps (\ref{eq:deftau11})-(\ref{eq:deftau127}) to \cite{StojnicUpper10}'s equation $(25)$. However, as shown in \cite{StojnicUpper10,StojnicCSetam09} the threshold condition \cite{StojnicUpper10}'s equation $(25)$ would provide is exactly what the upper bounds (and essentially the optimal values) of the weak thresholds are. The only difference is that one has to slightly adjust the dimensions. What are $k$, $m$, and $n$ in \cite{StojnicCSetam09,StojnicUpper10}, now are $k^{(g)}$, $m^{(g)}$, and $n^{(g)}$ where
\begin{eqnarray}
k^{(g)} & = & \frac{(\xi_n^{(l)})^2n^2}{n-m}\nonumber \\
m^{(g)} & = & m-k+k^{(g)}\nonumber \\
n^{(g)} & = & n-k+k^{(g)}.\label{eq:defkmng}
\end{eqnarray}

To give a threshold characterization for (\ref{eq:deftau12901}) we recall on the weak threshold characterization obtained in \cite{StojnicCSetam09,StojnicUpper10} for $k=\beta n$, $m=\alpha n$, and $n$ (we assume $n\rightarrow \infty$ and ignore all $\epsilon$'s from \cite{StojnicCSetam09,StojnicUpper10}).

\begin{theorem}(Weak threshold -- exact \cite{StojnicCSetam09,StojnicUpper10})
Let $A$ be an $m\times n$ matrix in (\ref{eq:system})
with i.i.d. standard normal components. Let
the unknown $\x$ in (\ref{eq:system}) be $k$-sparse. Further, let the location and signs of nonzero elements of $\x$ be arbitrarily chosen but fixed.
Let $k,m,n$ be large
and let $\alpha=\frac{m}{n}$ and $\beta_w=\frac{k}{n}$ be constants
independent of $m$ and $n$. Let $\erfinv$ be the inverse of the standard error function associated with zero-mean unit variance Gaussian random variable.  \begin{center}
\shadowbox{$
(1-\beta_w)\frac{\sqrt{\frac{2}{\pi}}e^{-(\erfinv(\frac{1-\alpha_w}{1-\beta_w}))^2}}{\alpha_w}-\sqrt{2}\erfinv (\frac{1-\alpha_w}{1-\beta_w})=0.
$}
-\vspace{-.5in}\begin{equation}
\label{eq:thmweaktheta2}
\end{equation}
\end{center}
Then:
\begin{enumerate}
\item If $\alpha>\alpha_w$ then with overwhelming probability the solution of (\ref{eq:l1}) is the $k$-sparse $\x$ from (\ref{eq:system}).
\item If $\alpha<\alpha_w$ then with overwhelming probability there will be a $k$-sparse $\x$ (from a set of $\x$'s with fixed locations and signs of nonzero components) that satisfies (\ref{eq:system}) and is \textbf{not} the solution of (\ref{eq:l1}).
    \end{enumerate}
\label{thm:thmweakthr}
\end{theorem}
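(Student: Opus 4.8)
The plan is to mirror the deterministic-then-probabilistic template already set up in the excerpt, but with the signs of the nonzero entries \emph{prescribed} rather than optimized over. Write $\b_i=\mbox{sign}(\x_i)$ for $i>n-k$. First I would record the weak analogue of Theorem \ref{thm:thmgenweak}: with $\x_1=\cdots=\x_{n-k}=0$ and the signs on the support fixed, $\ell_1$ recovers every such $\x$ iff every null-space vector $\w$ satisfies $\sum_{i=1}^{n-k}|\w_i|>\sum_{i=n-k+1}^{n}\b_i\w_i$, while the existence of a null-space $\w$ with the reversed strict inequality forces some admissible $\x$ to fail; both directions follow from the same subgradient argument used for Theorem \ref{thm:thmgenweak}, now keeping the signed term $\sum\b_i\w_i$ on the support. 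This motivates the governing quantity
\[
\tau_w=\min_{A\w=0,\,\|\w\|_2\le 1}\Big(\sum_{i=1}^{n-k}|\w_i|-\sum_{i=n-k+1}^{n}\b_i\w_i\Big),
\]
which is $\le 0$ always (attained at $\w=\0$), so that recovery of all admissible $\x$ holds iff $\tau_w=0$ and failure occurs for some admissible $\x$ iff $\tau_w<0$; by the concentration machinery of \cite{StojnicCSetam09,StojnicUpper10} it suffices to decide whether $\lim_{n\to\infty}E\tau_w/\sqrt{n}$ is $0$ or strictly negative.

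Next I would put $\tau_w$ into Gaussian dual form exactly as in the chain leading to \eqref{eq:deftau127}: split $|\w_i|$ via box variables, dualize the constraint $A\w=\0$, solve the inner least squares in $\nu$ in closed form, and use the SVD identity $I-A^T(AA^T)^{-1}A=D^{\perp}(D^{\perp})^T$ together with rotational invariance to replace $D^{\perp}$ by a fresh $(n-m)\times n$ standard Gaussian matrix $A^{(w)}$. The only difference from the excerpt is that here $\b$ is frozen, so there is no outer maximization over $\b$, and I expect to land (in distribution) on
\[
\tau_w\;=\;-\min_{\z}\|A^{(w)}\z\|_2,\qquad |\z_i|\le1\ (i\le n-k),\ \z_i=-\b_i\ (i>n-k),
\]
i.e. the single-$\b$ specialization of \eqref{eq:deftau127}. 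Consequently recovery is equivalent to the shifted box meeting $\ker A^{(w)}$ (so $\min=0$) and failure to $\min>0$, which is geometrically sensible: large $\alpha$ shrinks the number of rows $n-m$, enlarges the kernel, and makes the intersection generic.

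The probabilistic core is Gordon's comparison. Writing $\|A^{(w)}\z\|_2=\max_{\|\q\|_2\le1}\q^TA^{(w)}\z$, I would apply the Gaussian min-max inequality to replace $A^{(w)}$ by the decoupled surrogate $\|\z\|_2\g^T\q+\|\q\|_2\h^T\z$ with independent $\g\in\textbf{R}^{n-m}$ and $\h\in\textbf{R}^{n}$. Gordon's inequality gives the lower bound on $E\min_{\z}\|A^{(w)}\z\|_2$ that establishes failure for $\alpha<\alpha_w$ (the route of \cite{StojnicCSetam09}), while the reverse Slepian-type comparison gives the matching upper bound, forcing $\min=0$ and hence recovery for $\alpha>\alpha_w$ (the route of \cite{StojnicUpper10}); the agreement of these two bounds is exactly what makes the weak threshold \emph{exact} rather than merely bounded. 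After optimizing $\q$ (yielding $\|\z\|_2\|\g\|_2$) and concentrating $\|\g\|_2$ around $\sqrt{n-m}$, the $\h$-term decouples coordinatewise, the $k$ frozen coordinates contributing $\|\b\|_2^2=k$ to $\|\z\|_2^2$ and a deterministic linear piece, each free coordinate contributing $\min_{|\z_i|\le1}(\cdots\h_i\z_i\cdots)$.

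Finally I would carry out the scalar optimization. The per-coordinate minimization against a Gaussian $\h_i$ splits the $n-k$ free coordinates into a saturated group ($|\z_i|=1$) and an interior group, the split occurring at a Gaussian quantile that at optimality equals $\frac{1-\alpha_w}{1-\beta_w}$; evaluating the resulting expectations produces the density term $\sqrt{\tfrac{2}{\pi}}e^{-(\erfinv(\cdot))^2}$ and the quantile term $\sqrt{2}\,\erfinv(\cdot)$, and imposing stationarity of the Gordon objective in the overall scaling variable delivers precisely the boxed equation \eqref{eq:thmweaktheta2}. The main obstacle is the tightness step: one must verify that Gordon's lower bound and the reverse comparison meet at the \emph{same} $\alpha_w$, since only then is the transition of $E\min_{\z}\|A^{(w)}\z\|_2$ from strictly positive to zero pinned at a single point; a secondary, more routine technical point is upgrading the expectation-level statement to overwhelming probability via Gaussian Lipschitz concentration of $\tau_w$.
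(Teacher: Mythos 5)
Your proposal is correct and takes essentially the same approach as the paper: the paper's own proof of this theorem is simply a citation to \cite{StojnicCSetam09,StojnicUpper10} (with \cite{StojnicEquiv10} as an alternative), and your sketch --- the weak null-space characterization with frozen signs $\b$, the quantity $\tau_w$, the dualization/SVD reduction to $-\min_{\z}\|A^{(w)}\z\|_2$, Gordon's comparison in both directions, and the scalar quantile computation yielding (\ref{eq:thmweaktheta2}) --- is precisely the framework of those references, which Section \ref{sec:prob} of this paper itself reuses. Your only slips are cosmetic: the null-space inequality should be required for all \emph{nonzero} $\w$, and the attributions are swapped (the recovery direction via the primal Gordon lower bound is \cite{StojnicCSetam09}, while the failure direction via the lower bound on $\min_{\z}\|A^{(w)}\z\|_2$ is \cite{StojnicUpper10}).
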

\begin{proof}
The first part was established in \cite{StojnicCSetam09} and the second one was established in \cite{StojnicUpper10}. An alternative way of establishing the same set of results was also presented in \cite{StojnicEquiv10}. Of course, the weak thresholds were first computed in \cite{DonohoPol} through a different geometric approach.
\end{proof}

A combination of (\ref{eq:defkmng}) and (\ref{eq:thmweaktheta2}) then gives the following characterization of an upper bound on the sectional threshold.

\begin{theorem}(Sectional threshold -- upper bound)
Let $A$ be an $m\times n$ matrix in (\ref{eq:system})
with i.i.d. standard normal components. Let
the unknown $\x$ in (\ref{eq:system}) be $k$-sparse. Further, let the location of nonzero elements of $\x$ be arbitrarily chosen but fixed.
Let $k,m,n$ be large
and let $\alpha=\frac{m}{n}$ and $\beta_{sec}=\frac{k}{n}$ be constants
independent of $m$ and $n$. Let $\erfinv$ be the inverse of the standard error function associated with zero-mean unit variance Gaussian random variable. Further, let $\xi_{SK}$ be as in (\ref{eq:defxisk}) and let $\alpha_{sec}$ and $\beta_{sec}$ satisfy
\begin{equation}
(1-\beta_{sec})\frac{\sqrt{\frac{2}{\pi}}e^{-(\erfinv(\frac{1-\alpha_{sec}}{1-\beta_{sec}}))^2}}{\alpha_{sec}
-\beta_{sec}+\beta_{sec}(1+\xi_{SK}\sqrt{\frac{\beta_{sec}}{1-\alpha_{sec}}})^2}-\sqrt{2}\erfinv (\frac{1-\alpha_{sec}}{1-\beta_{sec}})=0.\label{eq:thmsectheta2}
\end{equation}
If $\alpha<\alpha_{sec}$ then with overwhelming probability there will be a $k$-sparse $\x$ (from the set of $\x$'s with fixed locations of nonzero components) that satisfies (\ref{eq:system}) and is \textbf{not} the solution of (\ref{eq:l1}).
\label{thm:thmweakthr}
\end{theorem}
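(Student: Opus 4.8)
The plan is to turn the failure statement into a statement about the sign of a single scalar limit and then to recognize that scalar as a weak-threshold quantity for a cleverly reshaped problem. First I would invoke the ``moreover'' (second) part of Theorem \ref{thm:thmgenweak}: producing a $k$-sparse $\x$ that defeats (\ref{eq:l1}) is equivalent to exhibiting a null-space vector whose $\ell_1$ mass on the support exceeds its mass off the support, which is exactly the event $\tau(A)<0$. By the concentration of the relevant Gaussian functionals (as in \cite{StojnicCSetam09,StojnicUpper10}), the probabilistic condition (\ref{eq:mcritpr}) reduces to the deterministic inequality (\ref{eq:mcrit}), i.e. to showing $\lim_{n\to\infty}E\tau^{(g)}(A)/n<0$. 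Everything up to (\ref{eq:deftau12901}) in the excerpt already produces an \emph{upper} bound on this limit, so it suffices to prove that this upper bound is strictly negative in the claimed regime.

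The key observation I would exploit is that the right-hand side of (\ref{eq:deftau12901}) is structurally identical to the weak-threshold optimization (\ref{eq:deftau127}) with all signs $\b_i=1$, only with the triple $(k,m,n)$ replaced by $(k^{(g)},m^{(g)},n^{(g)})$ from (\ref{eq:defkmng}). Hence I would apply the exact weak-threshold characterization (the boxed equation (\ref{eq:thmweaktheta2})) to this reshaped instance: writing $\alpha^{(g)}=m^{(g)}/n^{(g)}$ and $\beta^{(g)}=k^{(g)}/n^{(g)}$, the failure (second) part of that theorem guarantees that whenever $\alpha^{(g)}<\alpha_w^{(g)}$, where $\alpha_w^{(g)}$ solves (\ref{eq:thmweaktheta2}) at sparsity $\beta^{(g)}$, the reshaped $\ell_1$ problem fails with overwhelming probability. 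Through the chain of bounds this is precisely the certificate that the upper bound on $E\tau^{(g)}/n$ is negative, hence that the original sectional recovery fails.

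What then remains is the algebraic translation of ``$\alpha^{(g)}<\alpha_w^{(g)}$'' back into the stated condition. Using (\ref{eq:xihopmax1}) and (\ref{eq:defkg}) one first simplifies $\lim_n k^{(g)}/n=(\xi_n^{(l)})^2/(1-\alpha_{sec})=\beta_{sec}(1+\xi_{SK}\sqrt{\frac{\beta_{sec}}{1-\alpha_{sec}}})^2$. A direct computation then gives the two identities I need: $\frac{1-\alpha^{(g)}}{1-\beta^{(g)}}=\frac{n-m}{n-k}=\frac{1-\alpha_{sec}}{1-\beta_{sec}}$, so the $\erfinv$ arguments (and hence the Gaussian factors) of (\ref{eq:thmweaktheta2}) and (\ref{eq:thmsectheta2}) coincide, and $\frac{1-\beta^{(g)}}{\alpha^{(g)}}=\frac{n-k}{m^{(g)}}=\frac{1-\beta_{sec}}{\alpha_{sec}-\beta_{sec}+\beta_{sec}(1+\xi_{SK}\sqrt{\frac{\beta_{sec}}{1-\alpha_{sec}}})^2}$, using $m^{(g)}=m-k+k^{(g)}$. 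Substituting these into (\ref{eq:thmweaktheta2}) turns it verbatim into (\ref{eq:thmsectheta2}), so the failure boundary $\alpha_w^{(g)}$ maps exactly onto $\alpha_{sec}$, and $\alpha<\alpha_{sec}$ becomes equivalent to being below the reshaped weak threshold.

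The main obstacle is not the algebra but justifying the two inequalities that make the reduction an honest \emph{upper} bound on $E\tau^{(g)}/n$. First, replacing the inner $\max_{\b}$ by the scalar $\max_{\b}\|A^{(w)}_{:,n-k+1:n}\b\|_2$ times an arbitrary fixed direction $\a$ in (\ref{eq:deftau128}) must be shown to preserve the ``$\leq$'' there. Second, and more delicately, the Hopfield bound (\ref{eq:xihopmax1}) must be used in exactly the direction $\xi_n^{(l)}\leq\xi_n$ that keeps the expression an upper bound, and the replacement of $\xi_n^{(l)}n\a$ by the Gaussian block $A^{(w,g)}\z^{(g)}$ --- legitimate only after conditioning on $A^{(w,g)}$ with $\z^{(g)}$ fixed and matching second moments via (\ref{eq:defkg1}) --- must preserve the inequality and not merely the mean. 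Verifying that this sequence composes into a single valid upper bound whose negativity certifies $\tau(A)<0$, and hence (by concentration) $\lim_n P(\tau(A)<0)=1$, is the crux; once it is in place, the weak-threshold theorem together with the substitution above finishes the proof.
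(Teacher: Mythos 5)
Your proposal is correct and takes essentially the same route as the paper's own proof: it invokes the failure half of Theorem \ref{thm:thmgenweak}, reduces sectional failure to negativity of $\lim_{n\to\infty}E\tau^{(g)}(A)/n$ through the chain culminating in (\ref{eq:deftau12901}), recognizes that optimization as a weak-threshold instance with the reshaped parameters $(k^{(g)},m^{(g)},n^{(g)})$ from (\ref{eq:defkmng}), and applies the failure part of the exact weak-threshold characterization (\ref{eq:thmweaktheta2}) with precisely the substitutions of (\ref{eq:proofsec1}) --- your identities $\frac{1-\alpha^{(g)}}{1-\beta^{(g)}}=\frac{1-\alpha_{sec}}{1-\beta_{sec}}$ and $\lim_{n\to\infty}k^{(g)}/n=\beta_{sec}\bigl(1+\xi_{SK}\sqrt{\beta_{sec}/(1-\alpha_{sec})}\bigr)^2$ match the paper's computation and turn (\ref{eq:thmweaktheta2}) verbatim into (\ref{eq:thmsectheta2}). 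The verification points you flag at the end (the direction of the Hopfield bound $\xi_n^{(l)}\leq\xi_n$, which is harmless because $\min_{\z}\|c\a+A^{(w)}_{:,1:n-k}\z_{1:n-k}\|_2$ is nondecreasing in $c\geq 0$, and the conditioning argument justifying the replacement of $\xi_n^{(l)}n\a$ by $A^{(w,g)}\z^{(g)}$ via (\ref{eq:defkg1})) are exactly the steps the paper itself treats informally in the discussion preceding the theorem.
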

\begin{proof}
Follows from the previous discussion, after recognizing that an upper bound on the sectional threshold of interest can be determined through a characterization of an adjusted weak threshold characterization for system with parameters $k^{(g)}$, $m^{(g)}$, and $n^{(g)}$. According to (\ref{eq:thmweaktheta2}) (and essentially to \cite{StojnicCSetam09,StojnicUpper10}) the weak threshold characterization of the problem with parameters $k^{(g)}$, $m^{(g)}$, and $n^{(g)}$ is (of course assuming $n^{(g)}\rightarrow \infty$ and $k^{(g)}$ and $m^{(g)}$ are linearly proportional to $n^{(g)}$)
\begin{eqnarray}
& & (n^{(g)}-k^{(g)})\frac{\sqrt{\frac{2}{\pi}}e^{-(\erfinv(\frac{n^{(g)}-m^{(g)}}{n^{(g)}-k^{(g)}}))^2}}{m^{(g)}}-\sqrt{2}\erfinv (\frac{n^{(g)}-m^{(g)}}{n^{(g)}-k^{(g)}})=0\nonumber \\
& \Leftrightarrow & \frac{(n^{(g)}-k^{(g)})}{n}\frac{\sqrt{\frac{2}{\pi}}e^{-(\erfinv(\frac{n^{(g)}-m^{(g)}}{n^{(g)}-k^{(g)}}))^2}}{\frac{m^{(g)}}{n}}-\sqrt{2}\erfinv (\frac{n^{(g)}-m^{(g)}}{n^{(g)}-k^{(g)}})=0.\label{eq:proofsecadjweak}
\end{eqnarray}
Using (\ref{eq:defkmng}) one then has
\begin{eqnarray}
\frac{n^{(g)}-m^{(g)}}{n^{(g)}-k^{(g)}} & = & \frac{n-m}{n-k}=\frac{1-\alpha_{sec}}{1-\beta_{sec}}\nonumber \\
\frac{n^{(g)}-k^{(g)}}{n} & = & \frac{n-k}{n}=1-\beta_{sec}\nonumber \\
\frac{m^{(g)}}{n} & = & \frac{m-k+\frac{k^2}{n-m}(\sqrt{\frac{n-m}{k}}+\xi_{SK})^2}{n}=\alpha_{sec}-\beta_{sec}+
\beta_{sec}\left (1+\xi_{SK}\sqrt{\frac{\beta_{sec}}{1-\alpha_{sec}}}\right )^2.\nonumber \\\label{eq:proofsec1}
\end{eqnarray}
Plugging (\ref{eq:proofsec1}) back in (\ref{eq:proofsecadjweak}) gives (\ref{eq:thmsectheta2}).
\end{proof}

\noindent \textbf{Remark:} Of course, there are other ways how one can establish the above given upper bound. However, we decided to present the way that we consider fairly beautiful and that, at the same time, is not that far from the original one that we discovered while proving these results.

As stated above equation (\ref{eq:thmsectheta2}) is then enough to determine an upper bound on the sectional threshold of $\ell_1$ minimization. Numerical values of the sectional threshold obtained using (\ref{eq:thmsectheta2}) are presented in Figure \ref{fig:secthrub}. We also show in Figure \ref{fig:secthrub} the lower bounds on the sectional thresholds obtained in \cite{DonohoPol,StojnicCSetam09} and in \cite{StojnicLiftStrSec13} (we refer to those from \cite{StojnicCSetam09} as the direct sectional threshold lower bounds and to those from \cite{StojnicLiftStrSec13} as the lifted sectional threshold lower bounds). As can be seen the upper bounds obtained here are obviously not the same as the lower bounds but are not that far away either.

Also, to be completely mathematically rigorous, we should add the following. Namely, to make the above theorem operational, one needs a concrete value for $\xi_{SK}$. While an exact characterization of this quantity is known it is not explicit and one typically needs to resort to a numerical computation to completely determine it. Moreover, the known methods typically approach the true value from above, whereas what we would need here is something that approaches it from below (moreover to be again completely rigorous one should say that theoretically one may really need an infinite number of numerical computations to evaluate it exactly). However, we firmly believe that the estimate we gave above is very close to the true value and can in fact already be slightly below it. Also, even if one goes one decimal further and keeps only the first three digits (which should definitely be enough to be below the true value) the changes in the resulting curve would not be visible. Essentially, for all practical purposes the light blue curve in Figure \ref{fig:secthrub} is right where it should be, it is just that we wanted to make sure that this point is also taken into account.
\begin{figure}[htb]
\centering
\centerline{\epsfig{figure=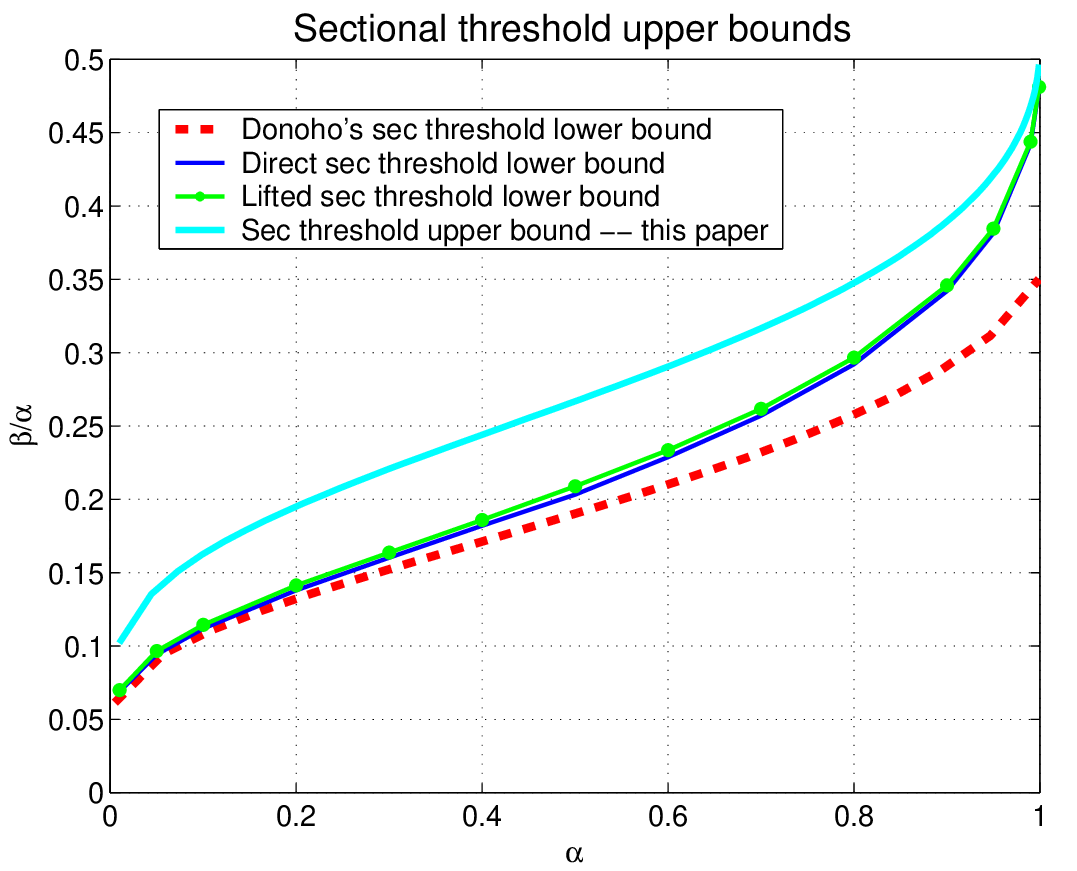,width=10.5cm,height=9cm}}
\vspace{-0.2in} \caption{\emph{Sectional} threshold, $\ell_1$-optimization --- upper bound}
\label{fig:secthrub}
\end{figure}

\section{Numerical experiments}
\label{sec:simul}

In this section we briefly discuss the results that we obtained from numerical experiments. We essentially adapted a well-known fast bit-flipping idea to design an algorithm that can numerically compute (simulate) the sectional threshold upper bounds.

\subsection{Algorithmic methodology}
\label{sec:simulalg}

Before going into the details of the obtained results we will briefly present the numerical/algorithmic methodology we used. Namely, we attempted to determine the sign of the optimal value of the objective function of the following optimization problem
\begin{eqnarray}
\min_{\w} & & -\sum_{i=n-k+1}^{n}|\w_i|+\sum_{1}^{n-k+1}|\w_i|\nonumber \\
\mbox{subject to} & & A\w=0.\label{eq:alg1}
\end{eqnarray}
Clearly negative optimum would imply that $\ell_1$ minimization sectionally fails whereas zero would mean that $\ell_1$ minimization sectionally succeeds.
Of course, this problem is not easy to solve. First there are a couple of purely numerical problems; 1) if the optimum is negative it is essentially unbounded and 2) if it is zero it is hard to believe that any finite precision machine will make it exactly zero. These problems can be handled, though. Simply adding a spherical constraint, say $\|\w\|_2\leq 50$, would fix potential unboundedness and adding a linear constraint, say $\sum_{i=1}^{n}\w_i=10$, should insure that $\w=0$ is not the solution (of course numbers $10$ and $50$ are randomly chosen; there are two things one needs to be careful about when choosing these numbers: 1) $10$ should not be small since we want to move away from zero when the optimum is nonnegative and 2) $50$ should be large enough so that a point on hyperplane $\sum_{i=1}^{n}\w_i=10$ that can potentially make objective's optimum negative is not outside the spherical constraint). Of course to insure not losing any potential solution one should resolve the problem with the same but negative linear constraint as well. Once these things are set one can look at (\ref{eq:alg1}) in the following way
\begin{eqnarray}
\tau^{(sim)}(A)=\min_{\b_i^2=1}\min_{\w} & & -\sum_{i=n-k+1}^{n}\b_i\w_i+\sum_{i=1}^{n-k+1}|\w_i|\nonumber \\
\mbox{subject to} & & A\w=0,\mbox{Sph},\pm  \mbox{Lin},\label{eq:alg2}
\end{eqnarray}
where $\mbox{Sph},\mbox{Lin}$ stand for the spherical and the linear constraint, respectively and $\pm$ indicates that the problem should be solved for both, positive and negative linear constraint. Solving the above problem over all $2^k$ different $\b$'s would produce the exact value of the optimum (in fact what we care about is the sign of the optimum). Given that $k$ can be large we instead looked at the following  simple bit-flipping algorithm. Namely, we start with $\b_i=1,n-k+1\leq i\leq n$, and with $i=n-k+1$ and then keep flipping each of $\b_i$'s (one after another, i.e. $\b_{i+1}$ after $\b_i$ for $n-k-1\leq i\leq n-1$ and $\b_{n-k+1}$ after $\b_n$) if the flipping lowers the objective value. We stop either when the objective value becomes negative or when further flipping of any of $\b_i$'s can not decrease the objective any more (or alternatively if a large number of iterations results in only marginal changes of the objective).

The above algorithm is very simple but it is far way from being the best possible (its various modifications are possible and quite often perform way better; of course quite a few different algorithms can be designed as well). Here, however, we do reemphasize that we chose it as pretty much the simplest possible while being fully aware that it is neither the most efficient complexity-wise nor the most accurate. Algorithmic studying of (\ref{eq:alg2}) is a topic on its own and since here it is not the main subject of our work we refrain from any further discussion as to how the above procedure can be improved. Instead we mention that here our goal is more to a give a rough picture/hint as to how far away from the optimum and each other our bounds are. Hence, we below present the results that we got through this simple version and leave any further consideration for a separate discussion related to algorithmic aspects of (\ref{eq:alg2}) that we will present elsewhere.

We summarize the above algorithm in Algorithm \ref{alg:alg1}. What we present in Algorithm \ref{alg:alg1} is just a sketch of the basic pseudo-code. As mentioned above one can modify it so that it stops much sooner if there are no substantial changes in the objective over a large number of iterations.

\begin{algorithm}[t]                      
\caption{A bit flipping algorithm to estimate $\mbox{sign}(\tau(A)$)}          
\label{alg:alg1}                           
\textbf{Input:} $A$, $k$, $m$, and $n$
\begin{algorithmic}[1]                    

\STATE Initialize $\tau_{min}^{(sim)}=100$, $\b_i=1$ for $n-k+1\leq i\leq n$
\STATE $j=0$, $j_{min}=1$

\WHILE{$\tau_{min}^{(sim)}\geq 0$}

\STATE $j=j+1$
\STATE $i=n-k+1+(j\mod k)$
\STATE $\b_{i}=-\b_{i}$

\STATE Solve (\ref{eq:alg2}) to obtain $\tau^{(sim)}(A)$

\IF{$(\tau^{(sim)}(A)<0)$ or $((j+1)\mod k=j_{min})$}
\STATE $\tau_{min}^{(sim)}(A)=\tau^{(sim)}(A)$
\STATE Terminate loop
\ENDIF

\IF{$\tau^{(sim)}(A)<\tau_{min}^{(sim)}(A)$}
\STATE $\tau_{min}^{(sim)}(A)=\tau^{(sim)}(A)$
\STATE $j_{min}=j$
\ELSE
\STATE $\b_{i}=-\b_{i}$
\ENDIF

\ENDWHILE

\end{algorithmic}
\textbf{Output:} $\mbox{sign}(\tau_{min}^{(sim)}(A))$
\end{algorithm}

\subsection{Numerical results}
\label{sec:simulnumres}

In all our numerical experiments we generated $m\times n$ matrices $A$ with i.i.d. zero-mean unit variance Gaussian random variables for any combination of $m$ and $n$ given in Tables \ref{tab:simulation1} and \ref{tab:simulation2}. For a fixed combination $(m,n)$ we attempted to solve underlying optimization problems for several different values of $k$ from the transition zone. For each combination $(k,m,n)$ we generated a number of different problem instances (i.e., different matrices $A$) which we call $\# \mbox{of repetitions}$ in Tables \ref{tab:simulation1} and \ref{tab:simulation2}. We then recorded the number of times our algorithm indicated that $\ell_1$ should sectionally fail, i.e. we recorded the number of times the algorithm achieved a negative objective in (\ref{eq:alg2}). All different $(k,m,n)$ combinations as well as the corresponding numbers of failed experiments are given in Tables \ref{tab:simulation1} and \ref{tab:simulation2}. Table \ref{tab:simulation1} contains the data for a range of $k$, $m$, and $n$ where $m\leq \frac{n}{2}$ or as we call it lower $\alpha$ range whereas
Table \ref{tab:simulation2} contains the data for a range of $k$, $m$, and $n$ where $m> \frac{n}{2}$ or as we call it higher $\alpha$ range.
\begin{table}[t]
\caption{ Simulation results for upper bounds of the sectional thresholds --- lower $\alpha=\frac{m}{n}\leq 0.5$ regime}\vspace{.1in}
\hspace{-.3in}\centering
\begin{tabular}{||c|c|c|c|c|c||}
\hline \hline
$n$ & $800$ & $400$ & $400$ & $400$ & $400$
\\ \hline
$m$ & $0.1n=80$ & $0.2n=80$ & $0.3n=120$ & $0.4n=160$ & $0.5n=200$ \\ \hline
$k$;  \# of errors/\# of repetitions    & $14$;  $99/100$  & $15$;  $99/100$ & $24$;  $99/100$  & $35$; $92/100$  & $50$; $99/100$    \\ \hline
$k$;  \# of errors/\# of repetitions    & $12$;  $79/100$  & $14$;  $89/100$ & $23$;  $80/100$  & $34$; $84/100$  & $48$; $90/100$    \\ \hline
$k$;  \# of errors/\# of repetitions    & $10$;  $22/100$  & $13$;  $58/100$ & $22$;  $44/100$  & $33$; $69/100$  & $46$; $53/100$    \\ \hline
$k$;  \# of errors/\# of repetitions     & $8$;   $0/100$  & $12$;  $19/100$ & $21$;  $21/100$  & $32$; $30/100$  & $44$;  $13/57$    \\ \hline
$k$;  \# of errors/\# of repetitions     & $6$;   $0/100$  & $11$;   $3/100$ & $20$;   $7/100$  & $31$; $17/100$  & $42$;  $4/100$    \\ \hline
$k$;  \# of errors/\# of repetitions     & $4$;   $0/100$  & $10$;   $0/100$ & $19$;   $1/100$  & $30$;   $1/27$  & $40$;   $0/14$    \\ \hline \hline
\end{tabular}
\label{tab:simulation1}
\end{table}

\begin{table}[t]
\caption{ Simulation results for upper bounds of the sectional thresholds --- higher $\alpha=\frac{m}{n} >0.5$ regime}\vspace{.1in}
\hspace{-.3in}\centering
\begin{tabular}{||c|c|c|c|c||}
\hline \hline
$n$ & $300$ & $200$ & $200$ & $200$
\\ \hline
$m$ & $0.6n=180$ & $0.7n=140$ & $0.8n=160$ & $0.9n=180$  \\ \hline
$k$;  \# of errors/\# of repetitions    & $51$; $100/100$  & $44$;  $98/100$ & $58$; $100/100$  & $74$; $99/100$     \\ \hline
$k$;  \# of errors/\# of repetitions    & $49$;  $95/100$  & $42$;  $81/100$ & $55$;  $92/100$  & $71$; $91/100$     \\ \hline
$k$;  \# of errors/\# of repetitions    & $47$;  $72/100$  & $40$;  $50/100$ & $53$;  $67/100$  & $69$; $68/100$     \\ \hline
$k$;  \# of errors/\# of repetitions    & $44$;   $21/99$  & $38$;  $19/100$ & $50$;  $34/100$  & $66$;  $22/57$     \\ \hline
$k$;  \# of errors/\# of repetitions    & $42$;   $8/100$  & $36$;  $13/100$ & $48$;    $5/28$  & $64$;  $9/100$     \\ \hline
$k$;  \# of errors/\# of repetitions    & $40$;   $2/100$  & $34$;    $0/31$ & $45$;   $1/100$  & $61$;  $1/100$     \\ \hline \hline
\end{tabular}
\label{tab:simulation2}
\end{table}

The interpolated data from Tables \ref{tab:simulation1} and \ref{tab:simulation2} are presented graphically in Figure \ref{fig:secthrubsim}. The color of any point in Figure \ref{fig:secthrubsim} shows the probability of having our algorithm guarantee that $\ell_1$-optimization will sectionally fail for a combination $(\alpha,\beta)$ that corresponds to that point. The colors are mapped to probabilities according to the scale on the right hand side of the figure.
The simulated results can naturally be compared to the theoretical prediction for the sectional threshold bounds. Hence, we also show in Figure \ref{fig:secthrubsim} the theoretical value for all sectional threshold bounds mentioned earlier (and shown in Figure \ref{fig:secthrub}). Since the algorithm we designed is suboptimal it may sometimes miss to find a case when $\ell_1$ should sectionally fail. That essentially means that the simulated results are also just upper bounds. Now, from Figure \ref{fig:secthrubsim} one can observe that the simulation results are exactly somewhere in between known theoretical upper and lower bounds. However, there are a couple of comments we need to add. The dimensions we simulated may not be large enough to reflect the real thresholds and at the same time we do not know how suboptimal the applied algorithm is (increasing the dimension could potentially lift the purple region while using optimal algorithms could lower it). Overall, we believe that the true thresholds are substantially closer to the green curve than to the light blue one, i.e. we believe that the lower bounds we created in \cite{StojnicCSetam09} and especially those we created in \cite{StojnicLiftStrSec13} are fairly close to the true sectional thresholds.

\begin{figure}[htb]
\centering
\centerline{\epsfig{figure=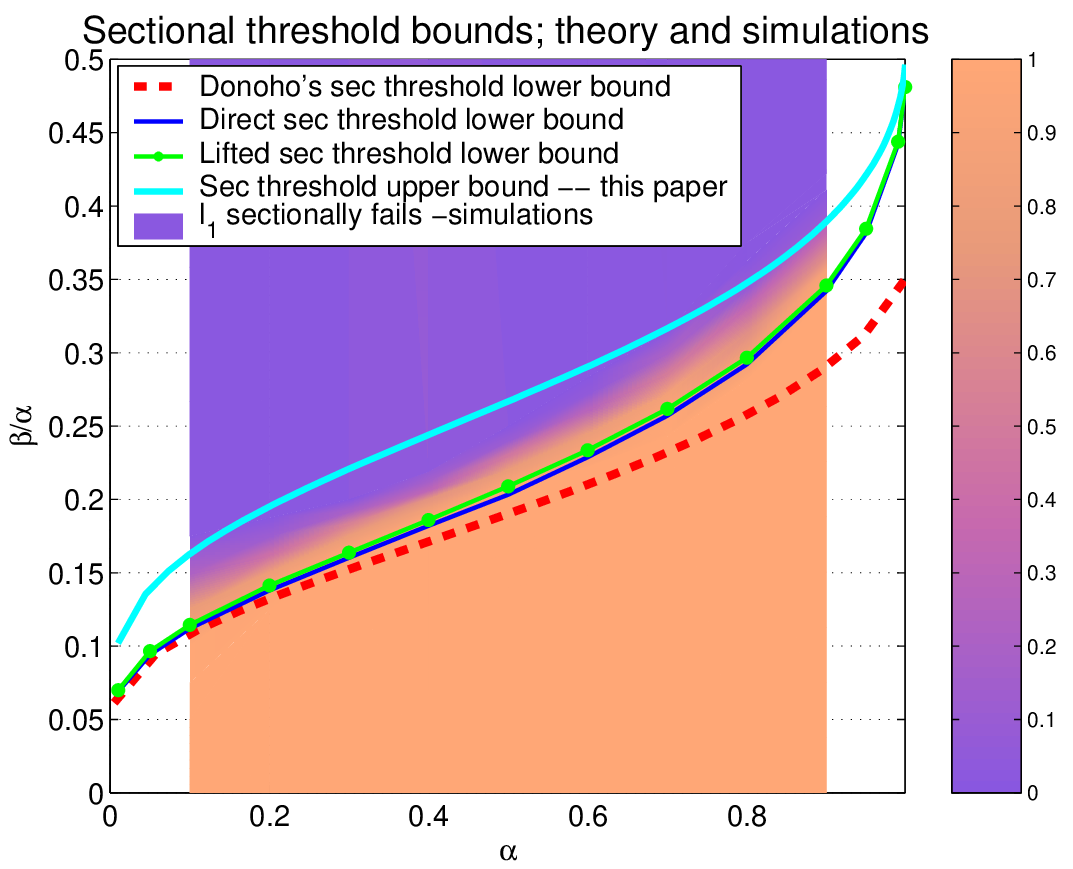,width=12cm,height=9cm}}
\vspace{-0.2in} \caption{\emph{Sectional} threshold, $\ell_1$-optimization --- upper bound; simulations}
\label{fig:secthrubsim}
\end{figure}

\section{Discussion}
\label{sec:discuss}

In this paper we considered under-determined linear systems of equations with sparse solutions.
We looked from a theoretical point of view at a classical polynomial-time
$\ell_1$-optimization algorithm. Barriers of $\ell_1$ performance are typically referred to as its thresholds. Depending if one is interested in a typical or worst case behavior one then distinguishes between the \emph{weak} thresholds that relate to a typical behavior on one side and the \emph{sectional} and \emph{strong} thresholds that relate to the worst case behavior on the other side. In this paper, we revisited the \emph{sectional} thresholds. Under the assumption that the system matrix $A$ has i.i.d. standard normal components,
we derived upper bounds on the values of the recoverable sectional
thresholds in the so-called linear regime, i.e. in the regime when
the recoverable sparsity is proportional to the length of the
unknown vector. Obtained upper bounds are relatively close to the known lower bounds we found through frameworks designed in \cite{StojnicCSetam09,StojnicLiftStrSec13}. The method we present relies on a seemingly simple but substantial progress we made in studying Hopfield models from statistical physics in \cite{StojnicHopBnds10}.

We should also mention that one can derive the upper bounds in a few different ways as well. However, we found that they typically have a more complicated presentation and don't result in a substantial improvement (i.e. while they occasionally may be better (lower) than the bounds we presented here they don't come close to matching the lower bounds). We then decided to present the method given here since in our view it is fairly elegant and in a way provides a quick assessment that the lower bounds given in \cite{StojnicCSetam09,StojnicLiftStrSec13} are highly likely not that far away from the optimal ones.

We should also mention that our results are presented for matrices $A$ with i.i.d. standard normal components. However, they hold for a way larger class of random matrices. We refrain from further discussions in this direction but instead refer to similar discussions we provided in e.g. \cite{StojnicRegRndDlt10,StojnicHopBnds10,StojnicLiftStrSec13,StojnicMoreSophHopBnds10}.

Further developments are of course possible (as is the case with pretty much any result we develop related to this and similar problems). Various specific problems that have been of interest in a broad scientific literature developed over the last few years, like quantifying the performance of $\ell_1$ type of optimization problems in solving systems with special structure of the solution vector (block-sparse, binary, box-constrained, low-rank matrix, partially known locations of nonzero components, just to name a few), systems with non-exact (noisy) solution vectors and/or equations can then have their sectional behavior bounded as well. In a few forthcoming companion papers we will present some of these applications.

What we believe is more important than adjusting the mechanism presented here to fit all problem variants is the recognition that studying the sectional thresholds may be substantially harder task than studying the corresponding weak ones. The reason is that the underlying optimization problems are combinatorial and studying their behavior (as discussed to great extent in \cite{StojnicLiftStrSec13}) typically requires a substantially larger effort.

\begin{singlespace}
\bibliographystyle{plain}
\bibliography{UpperBoundSecRefs}
\end{singlespace}

\end{document}